\documentclass[11pt]{article}

\linespread{1.1}\vfuzz2pt \hfuzz2pt

\usepackage{amsmath}
\usepackage{amssymb}
\usepackage{fullpage}
\usepackage{color}
\usepackage{graphicx}
\usepackage{epsfig}
\usepackage{amsthm}
\usepackage{latexsym}
\usepackage{amssymb}
\usepackage{amsmath}
\usepackage{verbatim}
\usepackage{graphicx}
\usepackage{epsfig}
\usepackage{subfigure}
\usepackage{cite}
\usepackage{url}
\usepackage[table]{xcolor}
\usepackage{tabularx}
\usepackage{algorithmic}
\usepackage{algorithm}
\usepackage{bbm}

\newtheorem{theorem}{Theorem}

\newtheorem{proposition}{Proposition}

\newtheorem{lemma}{Lemma}

\newtheorem{definition}{Definition}

\newcommand{\A}{\mathcal{A}}

\newcommand{\EF}{\mathrm{EF}}
\newcommand{\EFone}{\mathrm{EF1}}
\newcommand{\EFX}{\mathrm{EFX}}
\newcommand{\EFL}{\mathrm{EFL}}
\newcommand{\bEFL}{\mathrm{\mathbf{EFL}}}

\newcommand{\MMS}{\mathrm{MMS}}

\newcommand{\GMMS}{\mathrm{GMMS}}
\newcommand{\bGMMS}{\mathrm{\mathbf{GMMS}}}
\newcommand{\PMMS}{\mathrm{PMMS}}

\newcommand{\B}{\mathcal{B}}

\DeclareMathOperator*{\argmax}{arg\,max}
\DeclareMathOperator*{\argmin}{arg\,min}

\begin{document}
\title{{\bfseries Groupwise Maximin Fair Allocation of Indivisible Goods}}
\author{
{Siddharth Barman}\thanks{Indian Institute of Science. {\tt barman@iisc.ac.in}}
\quad {Arpita Biswas}\thanks{Indian Institute of Science. {\tt arpitab@iisc.ac.in}}
\quad {Sanath Kumar Krishnamurthy}\thanks{Chennai Mathematical Institute. {\tt sanathkumar9@cmi.ac.in}}
\quad {Y. Narahari}\thanks{Indian Institute of Science. {\tt narahari@iisc.ac.in}}
}
\date{}
\maketitle

\begin{abstract}
We study the problem of allocating \emph{indivisible} goods among $n$ agents in a \emph{fair} manner. For this problem, maximin share ($\MMS$) is a well-studied solution concept which provides a fairness threshold. Specifically, maximin share is defined as the minimum utility that an agent can guarantee for herself when asked to partition the set of goods into $n$ bundles such that the remaining ($n-1$) agents pick their bundles adversarially. An allocation is deemed to be fair if every agent gets a bundle whose valuation is at least her maximin share.  

Even though maximin shares provide a natural benchmark for fairness, it has its own drawbacks and, in particular, it is not sufficient to rule out unsatisfactory allocations. Motivated by these considerations, in this work we define a stronger notion of fairness, called \textit{groupwise maximin share guarantee} ($\GMMS$). In $\GMMS$, we require that the maximin share guarantee is achieved not just with respect to the grand bundle, but also among all the subgroups of agents. Hence, this solution concept strengthens $\MMS$ and provides an \emph{ex-post} fairness guarantee. We show that in specific settings, $\GMMS$ allocations always exist. We also establish the existence of approximate $\GMMS$ allocations under additive valuations, and develop a polynomial-time algorithm to find such allocations. Moreover, we establish a scale of fairness wherein we show that $\GMMS$ implies approximate envy freeness. 

Finally, we empirically demonstrate the existence of $\GMMS$ allocations in a large set of randomly generated instances. For the same set of instances, we additionally show that our algorithm achieves an approximation factor better than the established, worst-case bound. 
\end{abstract}

\section{Introduction}
In recent years, the topic of fair division of \textit{indivisible goods} has received significant attention in the computer science, mathematics, and economics communities, see, for instance, Chapter 12 in \cite{brandt2016handbook}. A central motivation behind this thread of research is the fact that classical notions of fairness---such as \textit{envy freeness} ($\EF$)\footnote{An allocation is called envy free if every agent values her bundle at least as much she values any other agent's bundle \cite{foley1967resource,varian1974equity,stromquist1980cut}.} and \textit{proportionality}\footnote{An allocation, among $n$ agents, is called proportionally fair if every agent's value for her share is at least $1/n$ times the total value of all the goods~\cite{steinhaus1948problem}.}---which were developed for divisible goods (that can be fractionally allocated), do not translate well to the indivisible case. For instance, if there is a single indivisible good and two agents, then no allocation can guarantee $\EF$ or {proportionality}. But, given that a number of real-world settings (such as budgeted course allocations \cite{budish2011combinatorial}, division of inheritance, and partitioning resources in a cloud computing environment) entail allocation of discrete/indivisible goods, it is essential to define and study solution concepts which are applicable for a fair division of indivisible goods. 

Classically, the applicability of solution concepts is studied via existence results. Understanding if and when a solution concept is guaranteed to exist is of  fundamental importance in microeconomics and other related fields. Such existence results have been notably complemented by research---in algorithmic game theory and artificial intelligence---that has focused on computational issues surrounding the underlying solution concepts. Broadly, our results contribute to these key themes by establishing both existential and algorithmic results for a new notion of fairness. 

In the context of fair division, the focus on developing efficient algorithms is motivated, in part, by websites such as \textit{Spliddit}\footnote{www.spliddit.org} \cite{goldman2015spliddit} and \textit{Adjusted Winner}\footnote{http://www.nyu.edu/projects/adjustedwinner/} \cite{brams2005efficient}, which offer fair solutions for dividing goods. Spliddit has attracted more than fifty thousand users and, among other services, it computes allocations which are fair with respect to the standard notions of fairness. One of the solution concepts considered by Spliddit is the \emph{maximin share guarantee} ($\MMS$). 

The $\MMS$ solution concept was defined in the notable work of \cite{budish2011combinatorial}, and it deems an allocation to be fair if each agent gets a bundle of value greater than or equal to an agent-specific threshold, called the \textit{maximin share} of the agent. Specifically, the maximin share of an agent corresponds to the maximum value that the agent can attain for herself if she is hypothetically asked to partition the set of goods into $n$ bundles and, then, the remaining ($n - 1$) agents pick their bundles adversarially. Hence, a risk-averse agent $i$ would partition the goods (into $n$  bundles) such that value of the least desirable bundle (according to her) in the partition is maximized. The value of the least desirable  bundle in such a partition is called the \textit{maximin share} of agent $i$. This definition can be interpreted as a natural generalization of the classical \emph{cut-and-choose} protocol.

Although maximin shares provide a natural benchmark to define fairness, this solution concept has its own drawbacks. In particular, $\MMS$ is not sufficient to rule out unsatisfactory allocations; see Section \ref{sec:example} for an example. Moreover, different $\MMS$ allocations can be drastically different in terms of, say, the social welfare of the agents.

Motivated by these considerations, we define a strictly stronger notion of fairness, called \textit{groupwise maximin share guarantee} ($\GMMS$). Intuitively, $\GMMS$ provides an \emph{ex-post} fairness guarantee: it ensures that, even after the allocation has been made, the maximin share guarantee is achieved not just with respect to the grand bundle of goods, but also among all the subgroups of agents $J\subseteq [n]$. Specifically, we say that an allocation is $\GMMS$ if, for all groups $J \subseteq [n]$ and agents $i \in J$, the value of $i$'s bundle in the allocation is no less than the maximin share of $i$ restricted to $J$. That is, if the agent $i$ were to repeat the thought experiment (of dividing all the goods allocated to the agents in group $J$, so that the other $|J|-1$ agents pick their bundles adversarially) to calculate her maximin share restricted to $J$, then the value of her bundle is at least this threshold. This definition directly ensures that groupwise maximin share guarantee is a stronger solution concept: $\GMMS$ implies $\MMS$. In Section \ref{sec:example}, we show that $\GMMS$ can, in fact, be arbitrarily better than an allocation that just satisfies $\MMS$. 

$\GMMS$ also strictly generalizes \textit{pairwise maximin share guarantee} ($\PMMS$), a notion defined by \cite{caragiannis2016unreasonable}. In $\PMMS$, the maximin share guarantee is required only for pairs of agents, but not necessarily for the grand bundle. Section \ref{example:GMMS_PMMS_MMS}, provides an example which establishes that $\GMMS$ is a strict generalization of $\PMMS$ and $\MMS$.

The relevance of $\GMMS$ is also substantiated by the fact that it implies other complementary notions of fairness, which do not follow from $\MMS$ alone. In the context of indivisible goods, relaxations of envy freeness, such as $\EFone$\footnote{An allocation is said to be envy-free up to one good ($\EFone$) if no agent envies any other after removing at most one good from the other agent's bundle; see Definition \ref{def:EF1}.} \cite{budish2011combinatorial} and $\EFX$\footnote{An allocation is said to be envy-free up to the least valued good ($\EFX$) if no agent envies any other agent after removing any positively valued good from the other agent's bundle; see Definition~\ref{def:EFX}.} \cite{caragiannis2016unreasonable} have also been studied. In Section \ref{sec:fairness-scale} we show that (unlike $\MMS$) $\GMMS$ fits into this scale of fairness and, in particular, a $\GMMS$ allocation is guaranteed to be $\EFX$ (and hence $\EFone$). These implications essentially follow from the observation that, by definition, a $\GMMS$ allocation is $\PMMS$ as well.  \cite{caragiannis2016unreasonable} have shown that $\PMMS$ implies $\EFX$, and hence we obtain the desired implications.

Throughout the paper, we focus on additive valuations, and a high-level contribution of our work is to show that under additive valuations, a number of useful (existence, algorithmic, and approximation) results which have been established for $\MMS$ continue to hold for $\GMMS$ as well. 

\subsection{Our Contributions}
In addition to proving a scale of fairness for $\GMMS$, we establish the following results:
\begin{enumerate}
\item Approximate groupwise maximin share allocations always exist under additive valuations. Prior work has shown that there are instances wherein no allocation is $\MMS$~\cite{procaccia2014fair,kurokawa2016can}. These non-existence results have motivated a detailed study of approximate $\MMS$ allocations, i.e., allocations in which each agent gets a bundle of value (multiplicatively) close to her maximin share. Along these lines, we consider approximate $\GMMS$ (see Definition \ref{def:approx-GMMS}), and show that, under additive valuations, a $1/2$-$\GMMS$ allocation always exists. In addition, such an allocation can be found in polynomial time. 
\item $\GMMS$ allocations are guaranteed to exist when the valuations of the agents are either binary or identical. 
\item Analogous to the experimental results for $\MMS$ \cite{bouveret2014characterizing}, $\GMMS$ allocations exist empirically. These simulation results indicate that we do not fall short on such generic existence results by strengthening the maximin solution concept.
\end{enumerate}

\subsection{Related Work}
As mentioned earlier, \cite{budish2011combinatorial} introduced the notion of maximin share guarantee ($\MMS$), and it has been extensively studied since then. In particular, \cite{bouveret2014characterizing} showed that if the agents' valuations are additive, then an envy free (or proportional) allocation will be $\MMS$ as well. They also established that $\MMS$ exists under binary, additive valuations. Their experiments, using different distributions over the valuations, did not yield a single example wherein an $\MMS$ allocation did not exist.  \cite{kurokawa2016can} showed that $\MMS$ allocations exist with high probability when valuations are drawn randomly.

However, \cite{procaccia2014fair} provided intricate counterexamples to refute the universal existence of $\MMS$ allocations, even under additive valuations. This motivated the study of approximate maximin share allocations, $\alpha$-$\MMS$, where each agent obtains a bundle of value at least $\alpha \in (0,1)$ times her maximin share. \cite{procaccia2014fair} established the existence of $2/3$-$\MMS$, and developed a polynomial-time algorithm to obtain such an allocation when the number of agents is a fixed constant. Later, \cite{amanatidis2015approximation} showed that a $2/3$-$\MMS$ can be computed in polynomial (in the number of players) time. 

Approximate maximin share allocations have also been studied for general valuations. \cite{barman2017approximation} have developed an efficient algorithm which finds a $1/10$-$\MMS$ allocation under submodular valuations. More recently, \cite{ghodsi2017fair} have improved the approximation guarantee for additive valuations to $3/4$. They have also developed constant-factor approximation guarantees for submodular and \emph{XOS} valuations, along with a logarithmic approximation for subadditive valuations.

\cite{aziz2016approximation} studied the fair division of indivisible \textit{chores} (negatively valued goods) and have developed an efficient algorithm which finds a $2$-$\MMS$ allocation.

\cite{caragiannis2016unreasonable} defined another important fairness notion called \textit{pairwise maximin share guarantee} ($\PMMS$). As mentioned previously, under $\PMMS$, the maximin share guarantee is required only for pairs of agents, and not even for the grand bundle. They also established that $\PMMS$ and $\MMS$ are incomparable: neither one of these solution concepts implies the other.

\section{Preliminaries and Notation}
We consider the problem of finding a \emph{fair} allocation of a set of \textit{indivisible} goods $[m]= \{1,\ldots,m\}$, among a set of agents $[n]= \{1,\ldots,n\}$. For a subset of goods $S~\subseteq~[m]$ and integer $t$, let $\Pi_t(S)$ denote the set of all $t$ partitions of $S$. An \textit{allocation} is defined as an $n$-partition $(A_1,A_2,\dots,A_n) \in \Pi_n([m])$, where $A_i$ is the set of goods allocated to agent $i$.

The preference of the agents over the goods is specified via valuations. Specifically, we denote the valuation of an agent $i \in [n]$ for a subset of goods $S\subseteq[m]$ by $v_{i}(S)$. In this work, we assume the valuations to be \textit{non-negative} and \textit{additive}, i.e., $v_i\left(\{g\}\right)\geq 0$ for all $g\in [m]$ and $v_i(S) = \sum_{g\in S}\ v_i\left(\{g\}\right)$. For ease of presentation, we will use $v_i(g)$ for agent $i$'s valuation of good $g$, i.e., for $v_i(\{ g \})$. 

As mentioned previously, the fairness notions considered in this work are defined using thresholds called maximin shares. Formally, given an agent $i$, parameter $k \in \mathbb{Z}_+$, and subset of goods $S \subseteq [m]$, $k$-maximin share of $i$ restricted to $S$ is defined as $\mu_i^{k}(S) := \max_{(P_1,\ldots, P_k)\in \Pi_{k}(S)}\  \min_{j\in [k]}\ v_i(P_j)$.
Throughout, $\MMS_i $ will be used to denote the maximin share of an agent $i$ with respect to the grand bundle, $\MMS_i := \mu_i^n([m])$.
We will now formally define \textit{maximin share allocation} ($\MMS$ allocation).
\begin{definition}[Maximin Share Allocation]
An allocation $(A_1,\ldots, A_n)$ is said to be a maximin share allocation ($\MMS$) iff for all agents $i \in [n]$ we have $v_i(A_i) \geq \MMS_i$.
\end{definition}

A different solution concept defined by \cite{caragiannis2016unreasonable} requires the maximin share guarantee to hold only for every pair of agents, i.e., an allocation $\mathcal{A} = (A_1, \ldots, A_n)$ is said to achieve \textit{pairwise maximin share guarantee} ($\PMMS$) iff for all  $i, j \in [n]$, we have $v_i(A_i) \geq \mu_i^2(A_i\cup A_j)$ and $v_j(A_j) \geq \mu_j^2(A_i\cup A_j)$.

In this paper, we strengthen $\MMS$ and $\PMMS$, and define a stronger threshold for each agent $i\in [n]$, namely \textit{groupwise maximin share} ($\GMMS_i$). Formally, 

\[\GMMS_i := \underset{J\subseteq[n], J\ni i}{max}\ \ \ \mu_i^{|J|}\left(\bigcup_{j\in J} A_j\right).\]
Now we define \textit{groupwise maximin share allocation} ($\GMMS$ allocation). 
\begin{definition}[Groupwise Maximin Share Allocation]\label{def:GMMS}
An allocation $(A_1,\ldots, A_n)$ is said to be a groupwise maximin share allocation ($\GMMS$) iff for all agents $ i \in [n]$ we have   $v_i(A_i) \geq \GMMS_i$.
\end{definition}

Note that an allocation $(A_1,\ldots, A_n)$ is $\GMMS$ iff $v_i(A_i)~\geq~\mu_i^{|J|}\left(\bigcup_{j\in J} A_j\right)$ for all $J\subseteq[n]$ such that $i\in J$. Also, the threshold $\GMMS_i$ is a function of the underlying allocation and, in contrast, $\MMS_i$ depends only the valuation of agent $i$ for the $m$ goods and the number of agents $n$. 

The fact that there are fair division instances which do not admit an $\MMS$ allocation directly implies that $\GMMS$ allocation are not guaranteed to exist either. Therefore, we consider approximate $\GMMS$ allocations.
\begin{definition}[Approximate Groupwise Maximin Share Allocation]\label{def:approx-GMMS}
An allocation $(A_1,\ldots, A_n)$ is said to be an $\alpha$-approximate groupwise maximin share allocation ($\alpha$-$\GMMS$) iff for all agents $ i \in [n]$ we have $v_i(A_i) \geq \alpha \GMMS_i$.
\end{definition}

A $1$-approximate groupwise maximin share allocation is a $\GMMS$ allocation. 

\subsection{$\GMMS$ Strictly Generalizes $\MMS$ and $\PMMS$}\label{example:GMMS_PMMS_MMS}
This section shows that $\GMMS$ is a distinct solution concept which strictly generalizes both $\MMS$ and $\PMMS$. In fact, the instance constructed in this section shows that there exists allocations which are $\MMS$ and, furthermore, satisfy the maximin share guarantee for all subgroups of size at most $k$ (say), but do not satisfy the $\GMMS$ criteria. 

We now formally define the notion of maximin  share guarantee for a subgroup of size $k$. For ease of presentation, we call such allocations $k$-wise fair. 
An allocation $\mathcal{A} = (A_1, \ldots, A_n)$ is said to be \emph{$k$-wise fair} iff for all agents $i\in [n]$ and for all size-$k$ subsets  $J \subseteq [n]$ such that $i \in J$, the following holds: $v_i(A_i) \geq \mu_i^{k}\left(\bigcup_{j\in J} A_j\right)$. 

In the following example we identify an allocation which is $t$-wise fair---for each $t < k$---but not $k$-wise fair; here, $k \geq 4$.
Let us consider $3k-4$ goods and $n$ agents, with $n>3k-4$. Since the number of agents is greater than the number of goods, $\MMS_i = 0$ for all $i\in [n]$ and, hence, all the allocations are $\MMS$. Consider an agent, say $i=1$, and let her valuation for the $3k-4$ goods be 
\begin{itemize}
\item $3k-7$ for the first $k-1$ goods $l_1, l_2, \ldots, l_{k-1}$ (``large'' goods),
\item $3$ for the next $k-2$ goods $d_2, \ldots, d_{k-1}$ (``medium-sized'' goods), and
\item $1$ for the remaining $k-1$ goods $s_1, ..., s_{k-1}$ (``small'' goods).
\end{itemize}

The valuation of other agents can be set to ensure fairness for them. Let allocation $\mathcal{A} = (A_1,\ldots,A_n)$ be such that $A_1 =\{l_1\}$, $A_2=\{l_2, d_2\}$, $\ldots$, $A_{k-1} = \{l_{k-1}, d_{k-1}\}$, $A_k=\{s_1,...,s_{k-1}\}$ and $A_r = \emptyset$ for all $r \geq k$, i.e., no goods are allocated to the last $n-k$ agents. We first show that the bundle allocated to agent $1$ is at least her pairwise maximin share with any other agent $j\in[n]$. Agent $1$ does not envy agent $k$ since $v_1(l_1)= 3k - 7 > k-1 = v_1(A_k)$ (since $k\geq 4$). Moreover, for the other agents $j\in\{2,\ldots,k-1\}$, we have $\mu_1^{2}(\{l_1, l_j, d_j\})= 3k - 7 = v_1(\{l_1\}) = v_1(A_1)$. Thus, $\mathcal{A}$ ensures $\PMMS$. Below, we present a case analysis which shows that $\mathcal{A}$ is also $t$-wise fair, for all $t\leq k-1$. 
\begin{enumerate}
\item Consider any subgroup $J$ of the first $k-1$ agents, containing agent $1$. Let $|J|=t$. In this case, exactly $2t-1$ goods are allocated to the group of agents $J$. In particular, this set of goods consists of $t$ ``large'' goods and $t-1$ ``medium-sized'' goods. 
Hence, any $t$-partition of these $2t-1$ goods will have at least one bundle with at most one good. Therefore, any partition which maximizes the least valued (with respect to the valuation of agent $1$) bundle will have one bundle containing a ``large'' good, and each of the remaining $t-1$ bundles containing one ``large'' and one ``medium-sized'' good.
Therefore, the allocation $A_1=\{l_1\}$ ensures maximin share guarantee for agent $1$ for all subgroups $J \subseteq \{1,\ldots,k-1\}$, which contain agent $1$. 

\item Next, consider any subgroup $J$ of size $t$ which includes agent $k$ as well as agent $1$. Here, the set of goods allocated to agents in $J$ consists of $t-1$ ``large'' goods, $t-2$ ``medium-sized'' goods and $k-1$ ``small'' goods. To simplify the analysis, we assume that $t=k/2$ and use an averaging argument to obtain 
\begin{align*}
\mu_1^t\left(\cup_{j\in J} A_j\right) &   \leq \frac{(3k-7)(t-1) + 3(t-2) + 1(k-1)}{t} \\
			 &= 3k - 4 - \frac{2k}{t}\\
			 &\leq 3k - 4 - 4\quad \qquad (\textrm{since } t\leq {k}/{2})\\
			 &= 3k - 8\\
			 &< 3k - 7 = v_1(A_1).
\end{align*}

Hence, for all such subgroups of agents of size $t\leq k/2$, the bundle $A_1=\{l_1\}$ satisfies the maximin share guarantee of agent $1$. A similar, but slightly involved argument establishes the result for subgroups of size $t\leq k-1$.

\item Finally, consider a size-$t$ subgroup $J \in [n]$ such that $1 \in J$ and $J$ contains at least one agent $h > k$, i.e., $J$ includes an agent $h$ with an empty bundle $A_h = \emptyset$. Write $J':=J\setminus\{h: h>k\}$ and $|J'|=t'$. The arguments used in the first two cases show that  the maximin guarantee for agent $1$ holds for $J'$, i.e., $v_1(A_1)\geq \mu_1^{t'}(\cup_{j\in J'}\ A_j)$. Using the fact that $t'<t$ and $\cup_{j\in J'}\ A_j = \cup_{j\in J}\ A_j$, we get $\mu_1^{t'}(\cup_{j\in J'}\ A_j)\geq\mu_1^t(\cup_{j\in J}\ A_j)$. These observations establish the required maximin share guarantee, $v_1(A_1)\geq \mu_1^t(\cup_{j\in J}\ A_j)$, for this case as well. 
\end{enumerate}
Overall, we get that the allocation $\mathcal{A}$ satisfies $t$-wise fairness, for any $t<k$. We will complete the analysis by showing that $\mathcal{A}$ is \emph{not} $k$-wise fair. 

In particular, let $J = [k]$. Note that, here, the set $\cup_{j\in J} A_j$ consists of $k-1$ ``large'' goods, $k-2$ ``medium-sized'' goods and $k-1$ ``small'' goods. In addition, the maximin share value with respect to this subgroup is $3k-6$:  partition the goods to obtain one bundle containing all $k-2$ ``medium-sized'' goods (each of value $3$) and, additionally, $k-1$ bundles each containing one ``large'' good and one ``small'' good. Therefore, $\mu_1^k\left(\cup_{j\in J} A_j\right) = 3k - 6 > 3k - 7 = v_1(A_1)$.



This example illustrates that, in particular, $\GMMS$ is a stronger solution concept than $\PMMS$ and $\MMS$. 

\section{$\GMMS$ can be arbitrarily better than $\MMS$}\label{sec:example}
In this section, we provide a class of examples where an $\MMS$ allocation is not necessarily satisfactory in terms of agents' valuations. In particular, we show that imposing $\GMMS$ leads to allocations which Pareto dominate an allocation which only satisfies $\MMS$.

Consider $n$ agents and a set of $n+3$ goods (where $[m]=\{g_1,\ldots,g_{n+3}\}$), along with parameter $V$ and a sufficiently small $\varepsilon$; $0< \varepsilon \ll V $. The valuations are assumed to be additive and are as follows:\\

$\mbox{For agents }\ i \in\{1,\ldots, n-1\}$,\\
$\begin{array}{ll}
& \ \ \ v_i(\{g_z\}) =   \left\{
\begin{array}{ll}
      V, &\quad z \in \{1,\ldots,n-3\} \\
      V/2, & \quad z \in \{n-2, n-1\} \\
      V/2-\varepsilon, & \quad z \in \{n, n+2\} \\
      V/2+\varepsilon, & \quad z \in \{n+1, n+3\} \\
\end{array} 
\right. \\\\
&\mbox{For the agent } n,\\
&\ \ \  v_n(\{g_z\}) =   \left\{
\begin{array}{ll}
      V, &\quad z \in \{1,\ldots,n-1\} \\
      0, & \quad z \in \{n, n+1\} \\
      \varepsilon, & \quad z \in \{n+2, n+3\} \\
\end{array} 
\right.\\
\end{array}$\\

Note that in this fair division instance the maximin share of the first $(n-1)$ agents,  $\MMS_i = V$ for all $i \in \{1,\ldots,n-1\}$. In addition, the maximin share of the last agent, $\MMS_n$, is $2\varepsilon$. 

Now, consider the allocation $\mathcal{A}=(A_1, \ldots,A_n)$ wherein 
$A_i = \{g_{i}\}, \ i \in \{1,\ldots,n-3\}$, 
$A_{n-2}=\{g_{n-2}, g_{n-1}\}$,
$A_{n-1} = \{g_{n}, g_{n+1}\}$, and 
$A_{n} = \{g_{n+2}, g_{n+3}\}$.
Allocation $\mathcal{A}$ is $\MMS$ since $v_i(A_i) = V$ for each $ i \in \{1,\ldots,n-1\}$ and $v_n(A_n) = 2\varepsilon$. In this allocation the valuation of agent $n$ is unsatisfactorily low. Another relevant observation is that this allocation is not $\GMMS$. Below, we show that in this instance \emph{any} $\GMMS$ allocation allocates a bundle of value $V$ to every agent, including agent $n$.  

The fact that $\mathcal{A}$ is not $\GMMS$ follows by considering the goods allocated to agents $n-2$ and $n$, i.e., let $S := A_{n-2}\cup A_n = \{g_{n-2}, g_{n-1}, g_{n+1}, g_{n+3}\}$. Now, $\mu_n^2(S)=V+\varepsilon$, but $v_n(A_n) = 2\varepsilon \ll V$. Furthermore, it can be observed that it is necessary to allocate the agent $n$ at least one of her high valued goods $\{1,\ldots,n-1\}$ to satisfy $\GMMS$. Thus, $\GMMS$ allocation would always ensure that a bundle of at least $V$ is allocated to all the agents. That is, in this instance, unsatisfactorily low valuations can be avoided by imposing the groupwise maximin share guarantee.

\section{Scale of Fairness}\label{sec:fairness-scale}
As mentioned earlier, \textit{envy freeness} ($\EF$) is a well-studied solution concept in the context of fair division of divisible items. However, for indivisible goods, a simple example with one positively valued good and two agents shows that envy-free allocations do not always exist. Hence, for indivisible goods, natural relaxations of envy freeness---in particular, $\EFone$ and $\EFX$---have been considered in the literature. We now provide formal definitions of these relaxations. 

\begin{definition}[Envy-free up to one good~\cite{budish2011combinatorial}]\label{def:EF1}
An allocation $\mathcal{A}=(A_1,A_2,\dots,A_n)$ is said to be {envy-free up to one good} ($\EFone$) iff for every pair of agents $i , j \in [n]$ there exists a good $g \in A_j$ such that $v_i(A_i) \geq v_i(A_j\setminus\{g\})$.
\end{definition}

\begin{definition}[Envy-free up to the least positively valued good \cite{caragiannis2016unreasonable}]\label{def:EFX}
An allocation $\mathcal{A}=(A_1,A_2,\dots,A_n)$ is said to be {envy-free up to the least valued good} ($\EFX$) iff for every pair of agents $i ,j \in [n]$ and for all goods $g \in A_j \cap \{ g' \in [m] \mid v_i(g') >0 \}$ (i.e., for all goods $g$ in $A_j$ which are positively valued by agent $i$) we have $v_i(A_i)\geq v_i(A_j\setminus\{g\})$.
\end{definition}

Note that the above mentioned definitions imply that, for additive valuations, an $\EFX$ allocation is necessarily $\EFone$ as well. Next we show that, interestingly, these relaxed versions of envy freeness are implied by $\GMMS$, but not by $\MMS$.

\begin{proposition}[Scale of Fairness]\label{lem:scale-envy}
In any fair division instance with additive valuations
	\begin{enumerate}
		\item If an allocation $\mathcal{A}$ is envy free ($\EF$) then it achieves groupwise maximin share guarantee ($\GMMS$) as well. 
		\item If an allocation $\mathcal{B}$ is $\GMMS$ then it is $\EFX$ (and, hence, $\EFone$) as well. 
	\end{enumerate}
\end{proposition}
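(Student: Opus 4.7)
My plan is to treat the two implications independently, since neither relies on the other.

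For Part 1 ($\EF$ implies $\GMMS$), I would use a one-line averaging argument. Fix agent $i$ and a subgroup $J \ni i$, and let $S = \bigcup_{j \in J} A_j$. For any $|J|$-partition $(P_1, \ldots, P_{|J|})$ of $S$, the minimum bundle value is at most the average $v_i(S)/|J|$. By additivity, $v_i(S) = \sum_{j \in J} v_i(A_j)$, and envy-freeness gives $v_i(A_j) \le v_i(A_i)$ for every $j \in J$, so this average is at most $v_i(A_i)$. Maximizing over partitions yields $\mu_i^{|J|}(S) \le v_i(A_i)$, and since $J$ was arbitrary this gives $\GMMS_i \le v_i(A_i)$, as required.

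For Part 2 ($\GMMS$ implies $\EFX$), the shortest route is to observe that $\GMMS$ applied to a pair $J = \{i, j\}$ is exactly the $\PMMS$ condition $v_i(B_i) \geq \mu_i^2(B_i \cup B_j)$, and then to invoke the theorem of \cite{caragiannis2016unreasonable} that $\PMMS$ implies $\EFX$ under additive valuations. If a self-contained proof is preferred, I would argue by contradiction: suppose some $g \in B_j$ with $v_i(g) > 0$ satisfies $v_i(B_i) < v_i(B_j \setminus \{g\})$, and consider the explicit 2-partition $(B_j \setminus \{g\},\ B_i \cup \{g\})$ of $S := B_i \cup B_j$. The first bundle has value strictly above $v_i(B_i)$ by the standing assumption; the second has value $v_i(B_i) + v_i(g) > v_i(B_i)$ because $g$ is positively valued by $i$. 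Hence $\mu_i^2(S) > v_i(B_i)$, contradicting $\GMMS$ for $J = \{i,j\}$. That $\EFX$ implies $\EFone$ under additive valuations follows directly from the definitions by choosing, for each $j$, any good in $A_j$ that $i$ values positively (or an arbitrary one if none exists).

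The argument is short enough that I do not anticipate a real obstacle. The only delicate point worth flagging is that the $\EFX$ definition only quantifies over goods with $v_i(g) > 0$; this restriction is essential, because it is precisely what makes the second inequality in the contradiction strict, and the proof would collapse if one tried to extract the stronger (and false in general) claim that $\GMMS$ implies full envy-freeness.
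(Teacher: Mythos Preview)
Your proposal is correct and matches the paper's proof essentially line for line: Part~1 is the same averaging argument, and Part~2 is the same reduction $\GMMS \Rightarrow \PMMS$ followed by the citation to \cite{caragiannis2016unreasonable}. The self-contained contradiction argument you add for $\PMMS \Rightarrow \EFX$ is a correct bonus that the paper omits; your remark about the necessity of $v_i(g) > 0$ for the strict inequality is exactly the right point.
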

\begin{proof}
First we will show that $\EF$ implies $\GMMS$: Assume that the allocation $\mathcal{A} = (A_1, \ldots, A_n)$ is {envy free}, that is, for all $i ,j \in [n]$ we have   $v_i(A_i) \geq v_i(A_j)$. Therefore, for any agent $i$ and any group of agents $J \subseteq [n]$ we have $|J| v_i(A_i) \geq \sum_{ j \in J} v_i(A_j)$. Since the valuation $v_i$ is additive, this inequality leads to the following bound $v_i(A_i) \geq \frac{1}{|J|} v_i(S)$; here $S = \cup_{j \in J} A_j$. Now, an averaging argument establishes the following inequality: for any $|J|$-partition of $S$ if $(P_1, P_2, \ldots, P_{|J|}) \in \Pi_{|J|} (S)$ then $v_i(A_i) \geq \min_{1 \leq k \leq |J|} v_i(P_k)$.
Hence, $v_i(A_i) \geq \max_{(P_1, \ldots, P_{|J|}) \in \Pi_{|J|} (S)} \ \min_{k} v_i(P_k) = \mu_i^{|J|} (S)$, and we get that $\A$ is $\GMMS$. 

Next, we argue that $\GMMS$ implies $\EFX$:  \cite{caragiannis2016unreasonable} have shown that  that $\PMMS$ implies $\EFX$. By definition, $\GMMS$ implies $\PMMS$. Hence, a $\GMMS$ allocation is guaranteed to be $\EFX$.
\end{proof}
Note that \cite{caragiannis2016unreasonable} also provided an example to show that maximin share guarantee by itself does not imply $\EFone$. Hence, an $\MMS$ allocation is not necessarily $\EFX$. Consider a fair division instance with three agents, five goods, and each agent values each good at $1$. The maximin share of all the agents is $1$. Thus, the allocation $A_1=\{g_1,g_2,g_3\}, A_2=\{g_4\}$ and $A_3=\{g_5\}$ satisfies $\MMS$, but not $\EFone$. This is because agents $2$ and $3$ continue to envy agent $1$ even if a single good is removed from $A_1$.

This, overall, shows that while $\MMS$ is not enough to guarantee weaker notions of envy freeness, $\GMMS$ ensures fairness in terms of such notions, and secures a place in the scale of fairness. \\

Next we will consider the complementary direction of going from bounded envy to groupwise maximin fairness. In particular, we will establish existence and algorithmic results for approximate $\GMMS$ by considering a solution concept which is stronger than $\EFone$, but weaker than $\EFX$. Specifically, we will define allocations which are \emph{envy-free up to one less-preferred good} ($\EFL$)---see Definition~\ref{def:EFL} below---and show that such allocations are guaranteed to exist, when the valuations are additive. Note that, in contrast, the generic existence of $\EFX$ allocations remains an interesting open question. Furthermore, we will prove that $\EFL$ allocations can be computed in polynomial time and, under additive valuations, such allocations imply $1/2$-$\GMMS$. 

\begin{definition}\label{def:EFL}
An allocation $\mathcal{A}=(A_1,A_2,\dots,A_n)$ is said to be \emph{envy-free up to one less-preferred good} ($\EFL$) if for every pair of agents $i , j  \in [n]$ at least one of the following conditions hold:
\begin{itemize}
\item $A_j$ contains at most one good which is positively valued by $i$; $|A_j \cap \{ g' \mid v_i(g') > 0 \} | \leq 1 $  
\item There exists a good $g \in A_j$ such that $v_i(A_i) \geq v_i(A_j\setminus\{g\})$ and $v_i(A_i) \geq v_i(\{g\})$.
\end{itemize}
\end{definition}

The fact that an $\EFL$ allocation is $\EFone$ follows directly from the definitions of these solution concepts. Also, note that if an allocation $(A_1, \ldots, A_n)$ is $\EFX$ then for any pair of agents $i, j \in [n]$ with $| A_j \cap \{ g' \mid v_i(g') > 0 \} | > 1 $ the second condition in the definition of $\EFL$ holds. In particular, write $A_j^i := A_j \cap  \{ g' \mid v_i(g') > 0 \} $ and consider two distinct goods $\hat{g}, \tilde{g} \in A_j^i$. Since the allocation is $\EFX$, we have $v_i(A_i) \geq v_i(A_j \setminus \{\hat{g}\})$ and $v_i(A_i) \geq v_i(A_j \setminus \{\tilde{g}\})$. Note that $\hat{g} \in A_j \setminus \{\tilde{g}\}$, and, hence, $v_i(A_i) \geq v_i(\hat{g})$. This implies that good $\hat{g}$ satisfies the second condition in Definition~\ref{def:EFL}. Hence, any $\EFX$ allocation is  $\EFL$ as well.  

With this new fairness notion, we have the following chain of implications:  $\mathrm{EF} \Rightarrow \GMMS \Rightarrow \EFX \Rightarrow \EFL \Rightarrow \EFone$.

\section{An Approximation Algorithm for $\bGMMS$}
Our main result in this section shows that $1/2$-$\GMMS$ allocations always exist under additive valuation, and such allocations can be found efficiently.  

\begin{theorem}\label{thm:main}
Every fair division instance with additive valuations admits a $1/2$-approximate groupwise maximin share allocation. Furthermore, such an allocation can be found in polynomial time. 
\end{theorem}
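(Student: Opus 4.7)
My plan is to decouple the theorem into two tasks: (i) construct, in polynomial time, an allocation that is $\EFL$ (Definition~\ref{def:EFL}); and (ii) show that any $\EFL$ allocation is automatically $1/2$-$\GMMS$ under additive valuations. The scale $\mathrm{EF}\Rightarrow\GMMS\Rightarrow\EFX\Rightarrow\EFL\Rightarrow\EFone$ already established in Proposition~\ref{lem:scale-envy} suggests that $\EFL$ is strong enough to force a useful $\GMMS$ guarantee but weak enough to be producible algorithmically.

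For (i), I would adapt the envy-cycle elimination procedure of Lipton, Markakis, Mossel, and Saberi: process the goods one at a time, assigning each to a source of the current envy graph, and break cycles (when present) by rotating bundles along them. The standard routine yields only $\EFone$; to strengthen it to $\EFL$ I would also process the goods in a carefully chosen order (for example, in decreasing value, with a tie-breaking rule) so that whenever a good enters a bundle, every good later added to that bundle is, from the viewpoint of any other agent, no more valuable than some good already present. An induction on the number of goods would then show that the two clauses of the $\EFL$ definition survive both good-insertions (via the source rule plus the monotone ordering) and cycle-rotations (which merely permute whole bundles). The polynomial runtime follows from the standard envy-cycle analysis.

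For (ii), fix $i\in J\subseteq[n]$, let $S=\bigcup_{j\in J} A_j$, and write $m=\mu_i^{|J|}(S)$. Assume for contradiction that $v_i(A_i)<m/2$. For any $j\in J\setminus\{i\}$ with $v_i(A_j)>m$, the second $\EFL$ clause would give $v_i(A_j)\leq 2v_i(A_i)<m$, a contradiction; so the first clause applies, meaning $A_j$ contains a unique $v_i$-positive good $g_j$ with $v_i(g_j)=v_i(A_j)>m$. Let $T=\{j\in J\setminus\{i\}: v_i(A_j)>m\}$ and $L=\{g_j:j\in T\}$; since the $A_j$ are pairwise disjoint, $|L|=|T|$. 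Any $|J|$-partition of $S$ with minimum value $m$ places the goods of $L$ in at most $|T|$ parts, so at least $|J|-|T|$ parts lie inside $S\setminus L$ and each has $v_i$-value at least $m$; hence $v_i(S\setminus L)\geq (|J|-|T|)m$. On the other hand, $v_i(A_j\setminus L)=0$ for $j\in T$ (by uniqueness of $g_j$), $v_i(A_j)\leq m$ for $j\in J\setminus(T\cup\{i\})$, and $v_i(A_i)<m/2$, so summing gives $v_i(S\setminus L)<m/2+(|J|-1-|T|)m=(|J|-|T|)m-m/2$, contradicting the lower bound.

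The principal obstacle is part (i): exhibiting a specific processing order and tie-breaking rule and then showing that envy-cycle elimination maintains the full $\EFL$ invariant after every step, rather than merely $\EFone$. The $\GMMS$ implication in (ii) is, by contrast, a short counting argument once the $\EFL$ dichotomy is unpacked as above.
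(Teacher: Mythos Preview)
Your two-step decomposition (produce an $\EFL$ allocation in polynomial time; then show $\EFL\Rightarrow 1/2$-$\GMMS$) is exactly the paper's route.

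For step (ii), your contradiction argument is correct and genuinely different from the paper's. The paper does not argue by contradiction; instead it restricts to the subgroup $J'=\{j\in J:\ |A_j\cap S_i|>1\}\cup\{i\}$, uses the second $\EFL$ clause to get $v_i(A_j)\le 2v_i(A_i)$ for all $j\in J'$, averages to obtain $\mu_i^{|J'|}(S')\le 2v_i(A_i)$, and finally proves $\mu_i^{|J'|}(S')\ge \mu_i^{|J|}(S)$ by stripping the (at most $|J\setminus J'|$) singleton positive goods out of an optimal $|J|$-partition. Your argument bypasses the subgroup reduction: you isolate the ``big'' bundles (those with $v_i(A_j)>m$), observe via the $\EFL$ dichotomy that each such bundle contributes a single large good, and then count parts of an optimal maximin partition that avoid those goods. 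Both proofs are short; yours is arguably more direct and avoids the monotonicity lemma $\mu_i^{t'}(S')\ge\mu_i^{t}(S)$.

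For step (i), your plan has a concrete gap. A \emph{global} ordering of the goods ``in decreasing value'' is not well-defined when agents disagree on values, and no single fixed order can guarantee the invariant you state (``every good later added is, from the viewpoint of any other agent, no more valuable than some good already present'') for all agents simultaneously. The paper's fix is not a global order at all: at each iteration the chosen source agent $i$ is handed her \emph{own} most-valued remaining good. The $\EFL$ proof then goes as follows for a fixed agent~$1$: before agent~$1$ receives her first good $g_t$, every bundle contains at most one good that agent~$1$ values positively (otherwise agent~$1$ would envy that bundle's owner, contradicting the source property at the moment the second such good was added). After iteration~$t$, any good $g_\ell$ allocated to any bundle at a later time $\ell>t$ was still available at time~$t$, so $v_1(g_\ell)\le v_1(g_t)\le v_1(A_1)$; combined with the standard $\EFone$ invariant (with respect to the last-added good), this yields the second $\EFL$ clause whenever the first one fails. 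So the missing idea is to make the greedy choice \emph{agent-specific} rather than to search for a universal good ordering.
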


\noindent \textit{Proof-Sketch}
The proof proceeds in two steps. First, we provide a constructive proof for the existence of  $\EFL$ allocations, under additive valuations (Section \ref{sec:EFL-existence}). Next, we complete the proof by showing that $\EFL$ implies $1/2$-$\GMMS$ (in Section \ref{sec:EFL-GMMS}) \hfill \qed

\subsection{Existence of $\bEFL$ Allocations} \label{sec:EFL-existence}
This section shows that $\EFL$ allocations are guaranteed to exist when the valuations are additive. Specifically, we develop an algorithm that always finds such an allocation.

\begin{lemma} \label{lemma:EFL}
Given any fair division instance with additive valuations, Algorithm~\ref{alg:envygraph} finds an $\EFL$ allocation in polynomial time. 
\end{lemma}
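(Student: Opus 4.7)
The plan is to extend the envy-cycle elimination procedure: maintain a partial allocation that is EFL on the goods allocated so far, and at each step find an unenvied agent (by cycle elimination in the envy graph) and hand that agent a carefully chosen remaining good. Cycle elimination preserves EFL because the multiset of bundles is unchanged by rotating along a directed cycle and every rotated agent weakly improves in her own utility; after a polynomial number of rotations the envy graph is a DAG and admits a source.

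The key design decision is which remaining good to assign to an unenvied agent $i$. After adding a good $g$ to $A_i$, for every other agent $j$ one of the two EFL alternatives must hold. I would analyze this by taking $g^{*}$ to be the $v_j$-maximum good in $A_i \cup \{g\}$: the leave-one-out inequality $v_j(A_j) \geq v_j((A_i \cup \{g\}) \setminus \{g^{*}\})$ is automatic, because in the case $g^{*} = g$ the left-over bundle is $A_i$ (and $v_j(A_j) \geq v_j(A_i)$ by unenviousness of $i$), while in the case $g^{*} \in A_i$ the substitution $g^{*} \leftrightarrow g$ weakly decreases $v_j$ by maximality of $g^{*}$. The second inequality $v_j(A_j) \geq v_j(g^{*})$ is automatic whenever $g^{*} \in A_i$, since then $v_j(g^{*}) \leq v_j(A_i) \leq v_j(A_j)$ by non-negativity and additivity. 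The only worrisome case is when $g$ itself is the $v_j$-maximum in $A_i \cup \{g\}$, which imposes the extra requirement $v_j(A_j) \geq v_j(g)$. I would therefore pick $g$ from among the remaining goods so as to avoid this failure mode---for instance, by processing goods in a globally decreasing order of value and arguing that by the time $g$ is considered, each agent $j$ for which $v_j(g)$ is large has already received a bundle of comparable or greater $v_j$-value---and then use an exchange argument to show that at least one safe choice always exists.

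The chief obstacle is establishing the existence of a safe good at every iteration; this is the technical heart of the lemma and I expect it to combine the inductive EFL invariant with a pigeonhole or exchange argument over the remaining pool and the current bundles. Once existence is established, polynomial running time follows routinely: cycle elimination on the envy graph terminates in polynomially many rotations by the standard monotone-potential argument, the safe-good scan is polynomial in $m$ and $n$, and the outer loop runs at most $m$ times, yielding overall polynomial complexity.
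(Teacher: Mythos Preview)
Your proposal has a genuine gap. The lemma is about a fixed algorithm whose selection rule is that a source agent $i$ receives $g \in \argmax_{g' \in M} v_i(g')$, her own favourite remaining good. You never identify this rule; instead you propose choosing $g$ so that the failure mode $v_j(A_j) < v_j(g)$ is avoided for every $j \neq i$ simultaneously, and you leave the existence of such a ``safe'' $g$ as the unproven ``technical heart''. Your heuristic of ``processing goods in a globally decreasing order of value'' is not well-defined when agents have different valuations, and the pigeonhole/exchange argument you gesture at is not supplied. As written, the proposal therefore does not establish the lemma.

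The paper's argument is organized differently and sidesteps the difficulty you isolate. It does not maintain $\EFL$ as an invariant. The inductive invariant is only that the partial allocation is $\EFone$ \emph{with respect to the last good added to each bundle}; this holds because the receiving agent was unenvied just before the addition, and cycle elimination preserves it. The $\EFL$ conclusion is then obtained by a global argument: fix any agent $i$ and let $t$ be the first iteration in which $i$ receives a good, say $g_t$. Prior to iteration $t$ the bundle $A_i$ is empty, so every source at those iterations held no good positively valued by $i$ (otherwise $i$ would envy it); hence at time $t$ every bundle contains at most one good positively valued by $i$. If in the final allocation some $A_j$ contains at least two goods positively valued by $i$, its last-added good $g_\ell$ must have arrived after iteration $t$; since $g_\ell$ was still unallocated when $i$ picked $g_t$, the rule $g_t \in \argmax v_i$ gives $v_i(g_\ell) \le v_i(g_t) \le v_i(A_i)$, while the $\EFone$ invariant gives $v_i(A_i) \ge v_i(A_j \setminus \{g_\ell\})$. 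The idea you are missing is that the good is chosen according to the \emph{receiving} agent's valuation, and the $\EFL$ inequality for an envious agent $i$ is proved by comparing the last good in $A_j$ to the good $i$ took at the first moment she was served.
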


Algorithm~\ref{alg:envygraph} iteratively allocates the goods and maintains a partial allocation, $\mathcal{A}=(A_1, \ldots, A_n)$, of the goods assigned so far. In each iteration, the algorithm selects an agent $i$ who is not currently envied by any other agent, and allocates $i$ an unassigned good of highest value (under $v_i$). 

{
	\begin{algorithm}
	{
		{{\bf Input :} $n$ agents, $m$ items, and valuations $v_i{\{g\}}$ for each agent $i \in [n]$ and for each good $g \in [m]$.\\ {\bf Output:} An $\EFL$ allocation.}
		\caption{Finding an $\EFL$ Allocation}
		\label{alg:envygraph}
		\begin{algorithmic}[1]
			\STATE Initialize allocation $\mathcal{A}=(A_1, A_2, \ldots, A_n)$ with $A_i = \emptyset$ for each agent $ i \in [n]$, and $M=[m]$.
			\STATE Initialize directed \textit{envy} graph $G(\mathcal{A})=([n], E)$ where $E=\emptyset$. 
			\WHILE{$M \neq \emptyset$}
			\STATE Pick a source vertex $i$ of ${G}(\mathcal{A})$. \COMMENT{such a vertex always exists, since ${G}(\mathcal{A})$ is acyclic.}\\
			\STATE Pick $g\in \argmax_{j\in M} v_i{\{g\}}$.
			\STATE Update $A_i \leftarrow A_i \cup \{ g \}$ and $M \leftarrow M \setminus \{g \}$.
			\WHILE{the current envy graph $G(\mathcal{A})$ contains a cycle} 
			\STATE Update $\mathcal{A}$ (using {Lemma~\ref{lemma:envylipton}}) to remove the cycle.
			\ENDWHILE
			\ENDWHILE 
			\STATE Return $\mathcal{A}$.
		\end{algorithmic}}
	\end{algorithm}}
	
	Throughout the execution of the algorithm, the existence of an unenvied agent is ensured by maintaining a directed graph, $G(\mathcal{A})$, that captures the envy between agents. The nodes in this envy graph represent the agents and it contains a directed edge from $i$ to $j$ iff $i$ envies $j$, i.e., $v_i(A_i) < v_i(A_j)$. Lemma \ref{lemma:envylipton}, established in~\cite{lipton-envy-graph}, shows that if any iteration leads to a cycle in the envy graph $\mathcal{G}(\mathcal{A})$, then we can always \emph{resolve} it to obtain an acyclic envy graph without decreasing the valuation of any agents; for completeness, we provide a proof of Lemma \ref{lemma:envylipton}. It is relevant to that, since $G(\mathcal{A})$ is acyclic for a partial allocation $\mathcal{A}$, it necessarily contains a source node, i.e., an agent $i$ who is not envied by other agents. 

Although, Algorithm $\ref{alg:envygraph}$ is similar to the algorithm developed in \cite{lipton-envy-graph}---which efficiently finds $\EFone$ allocations--- here, instead of assigning goods in an arbitrary order, we always allocate to an unenvied agent the available good she values the most. This is crucial for obtaining an $\EFL$ allocation.
\begin{lemma}\cite{lipton-envy-graph}
	\label{lemma:envylipton}
	Given a partial allocation $\mathcal{A}=(A_1, \ldots, A_n)$ of a subset of goods $S \subseteq [m]$, we can find another partial allocation $\mathcal{B}=(B_1, \ldots, B_n)$ of $S$ in polynomial time such that \\
(i) The valuations of agents for their bundles do not decrease, that is, $v_i(B_i) \geq v_i(A_i)$ for all $i \in [n]$.\\
(ii) The envy graph $G(\mathcal{B})$ is acyclic.
\end{lemma}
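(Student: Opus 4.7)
The plan is to prove the lemma by resolving envy-cycles one at a time via a bundle-rotation operation. Given a directed cycle $i_1 \to i_2 \to \cdots \to i_k \to i_1$ in the envy graph $G(\mathcal{A})$, where edge $i_j \to i_{j+1}$ certifies $v_{i_j}(A_{i_j}) < v_{i_j}(A_{i_{j+1}})$, I will construct $\mathcal{A}'$ by setting $A'_{i_j} := A_{i_{j+1}}$ for each $j$ (indices mod $k$) and $A'_p := A_p$ for every agent $p$ off the cycle. The overall algorithm detects a cycle in $G(\mathcal{A})$ (e.g., by DFS), performs one rotation, rebuilds the envy graph, and repeats until none remain.

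For property (i), each agent $i_j$ on the cycle is reassigned the bundle they previously envied, so $v_{i_j}(A'_{i_j}) = v_{i_j}(A_{i_{j+1}}) > v_{i_j}(A_{i_j})$; agents off the cycle retain their bundles. Thus valuations strictly increase along the cycle and are preserved elsewhere. Property (ii) is immediate once the outer loop terminates, so the crux is bounding the number of rotations.

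The main obstacle is the polynomial-time bound: merely invoking strict monotonicity of $\sum_i v_i(A_i)$ does not rule out an exponential number of iterations with real-valued utilities. To handle this, I will use the combinatorial observation that a rotation only permutes which agent holds which bundle, leaving the multiset $\{A_1, \ldots, A_n\}$ invariant. This suggests the potential
\[ \Phi(\mathcal{A}) \;:=\; \sum_{i=1}^{n} r_i(\mathcal{A}), \]
where $r_i(\mathcal{A})$ is the rank of $A_i$ in the list of the $n$ current bundles sorted in nondecreasing order of $v_i$, with ties broken by a fixed index rule. Because the bundle-multiset is fixed across rotations, every $r_i$ takes values in $\{1, \ldots, n\}$, so $\Phi \le n^2$. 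A rotation moves each on-cycle agent $i_j$ to a strictly $v_{i_j}$-preferred bundle, so $r_{i_j}$ strictly increases, while off-cycle ranks are unchanged; hence $\Phi$ increases by at least $k \ge 2$ per rotation. This yields at most $O(n^2)$ rotations. Each iteration does cycle detection and graph rebuild in $O(n^2)$ time, giving an overall $O(n^4)$ bound, and establishing the lemma.
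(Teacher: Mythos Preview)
Your proof is correct and follows the same core strategy as the paper: repeatedly rotate bundles along an envy-cycle, observe that on-cycle agents strictly improve while off-cycle agents are unaffected, and bound the number of rotations via a combinatorial potential. The only difference is in the choice of potential for termination. The paper simply uses the number of edges in the envy graph: after a rotation, the out-degree of each on-cycle agent $i_a$ strictly drops (since the multiset of bundles is unchanged and $i_a$ now holds a strictly $v_{i_a}$-better one), while out-degrees of off-cycle agents are unchanged; hence $|E(G(\mathcal{A}))|$ strictly decreases, giving at most $n(n-1)$ rotations. Your rank-sum potential $\Phi$ is essentially the complement of this: ignoring ties, the out-degree of $i$ equals $n - r_i$, so $\Phi = n^2 - |E|$, and ``$\Phi$ increases'' is the same statement as ``$|E|$ decreases.'' Both yield the same $O(n^2)$ bound on rotations; the edge-count formulation is just a bit more direct and avoids the need to discuss tie-breaking for the rank function.
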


\begin{proof}
	If the envy graph of $\mathcal{A}$ is acyclic then the claim holds trivially. Otherwise, let us denote a cycle in the graph $G(\mathcal{A})$ by $C=i_1 \rightarrow i_2 \rightarrow \ldots \rightarrow i_k \rightarrow i_1$. Now, we can reallocate the bundles as follows: for all agents not in $C$, i.e., $k \notin \{i_1, i_2, \ldots, i_k \}$ set $A'_k = A_k$, and for all the agents in the cycle set $A'_i$ to be the bundle of their successor in $C$, i.e., set $A'_{i_a} = A_{i_{(a+1)}}$ for $ 1\leq a < k$ along with $A'_{i_k} = A_{i_1}$.
	
	Note that after this reallocation we have $v_i(A'_i) \geq v_i(A_i)$ for all $i \in [n]$. Furthermore, the number of edges in $G(\mathcal{A})$ strictly decreases: The edges in $C$ do not appear in the envy graph of $(A'_1, \ldots, A'_n)$ and if an agent $k$ starts envying an agent in the cycle, say agent $i_a$, then $k$ must have been envious of $i_{a+1}$ in $\mathcal{A}$. Edges between agents $k$ and $k'$ which are not in $C$ remain unchanged, and edges going out of an agent $i$ in the cycle $C$ can only get removed, since $i$ valuation for the bundle assigned to her bundle increases. Therefore, we can repeatedly remove cycles and keep reducing the number of edges in the envy graph to eventually a find a partial allocation $\B$ that satisfies the stated claim.
\end{proof} 

\begin{proof}[Proof of Lemma \ref{lemma:EFL}]
Write $\mathcal{A} = (A_1, \ldots, A_n)$ to denote the allocation returned by Algorithm~\ref{alg:envygraph}.
First, we note that an inductive argument proves that $\mathcal{A}$ is $\EFone$. In fact, we will show that the $\EFone$ condition holds for $\mathcal{A}$ with respect to the last (in terms of the algorithm's allocation order) good $g$ assigned to each bundle $A_j$. Write $g_t$ to denote the good allocated in the $t$th iteration of Algorithm~\ref{alg:envygraph}; hence, the goods are allocated in the following order $g_1, g_2, \ldots, g_m$.   

The initial partial allocation $(\emptyset,\emptyset,\dots,\emptyset)$ is $\EFone$ (in fact, it is envy free). Now, say that in the $j$th iteration the algorithm allocates good $g_j$ to agent $i$. Write $\mathcal{A}^{j} =(A^{j}_1, \ldots, A^{j}_n)$ and $\mathcal{A}^{j+1} = (A^{j+1}_1, \ldots, A^{j+1}_n)$, respectively, to denote the partial allocations before and after the $j$th iteration. The induction hypothesis implies that $\mathcal{A}^{j}$ is $\EFone$ with respect to the last assigned good. Therefore, for every pair of agents $r, s \in [n]$, we have $v_r(A^{j}_r)\geq v_r(A^{j}_s\setminus\{g_a\})$, where $g_a$ is the last good assigned to the bundle $A_s^j$ (i.e., for any other good $g_b \in A^j_s$, we have $b<a$).   

Since the good $g_j$ is allocated to agent $i$, it must be the case that $i$ is a source vertex in $G(\mathcal{A}^j)$, i.e., no agent envies $i$ under $\mathcal{A}^j$. 
This implies that, $v_r(A^j_r) \geq v_r(A^j_i) = v_r((A^j_i\cup\{g_{j}\})\setminus\{g_{j}\})$ for all $r \in [n]$. Note that at this point of time, $g_j$ is the last good assigned to the bundle $A^j_i$. In addition, from the proof of Lemma~\ref{lemma:envylipton}, we know that $\mathcal{A}^{j+1}$ is a permutation of the allocation $(A^j_1,A^j_2,\ldots,A^j_i\cup\{g_{j}\},\ldots,A^j_n)$, and $v_r(A^{j+1}_r) \geq v_r(A^j_r)$ for all $r \in [n]$. Hence, for every pair of agents $r, s \in [n]$ there exists a good $g_a \in A^{j+1}_s$ such that $v_r(A^{j+1}_r)\geq v_r(A^{j+1}_s\setminus\{g_a\})$. In addition, $g_a$ is the last good assigned to the bundle $A^{j+1}_s$. That is, the stated property holds for $\mathcal{A}^{j+1}$ as well. 

Now, we will use this observation to prove that $\mathcal{A}$ is $\EFL$. Specifically, we show the $\EFL$ conditions hold for, say, agent $1$ (analogous arguments establish the claim for the other agents). Suppose that, during the execution of the algorithm, agent $1$ receives its first good, $g_t$, in the $t$th iteration. Note that the partial allocation before the $t$th iteration, say $\mathcal{A}^t=(A^t_1, \ldots, A_n^t)$, satisfies $|A^{t}_i \cap \{g' \in [m] : v_1(g')>0\}| \leq 1$ for all $i \in [n]$. This bound follows from the observation that during any previous iteration $s < t$, the selected source vertex $i$ (i.e., the agent that gets a new good during the $s$th iteration) does not contain any good which is positively valued by agent $1$; otherwise, $i$ would have been envied by $1$, contradicting the fact that it is a source vertex. Hence, each bundle in $\mathcal{A}^t$ contains at most one good from the set $\{g' \in [m] : v_1(g')>0\}$.

Let us now consider the final allocation $\mathcal{A}=(A_1, \ldots, A_n)$ and any agent $j \in [n]$. If $|A_j \cap \{g' \in [m] : v_1(g')>0\}| \leq 1$, then the first condition in the definition of $\EFL$ holds and we are done, else if $|A_j \cap \left\{g' \in [m] : v_1(g')>0\right\}| > 1$, then bundle $A_j$ must have received a good after the $t$th iteration. This is consequence of the above-mentioned property of the partial allocation $\mathcal{A}^t$. Write $g_\ell$ to denote the last good allocated to the bundle $A_j$. We have $\ell > t$, since $g_\ell$ was assigned to $A_j$ after the $t$th iteration. Also, in the $t$th iteration good $g_t$ was selected instead of $g_\ell$, hence it must be the case that $v_1(g_t) \geq v_1(g_\ell)$.  

Note that, as mentioned before, the $\EFone$ condition holds for $A_j$ with respect to $g_\ell$, i.e., $v_1(A_1) \geq v_1(A_j \setminus \{ g_\ell\})$. In addition, Lemma~\ref{lemma:envylipton} implies that $v_1(A_1) \geq v_1(\{g_t\})$. Therefore, if $|A_j \cap \{g' \in [m] : v_1(g')\}| >1$ for any $j \in [n]$, then there exists a good $g_\ell$ such $v_1(A_1)\geq v_1(A_j\setminus\{g_\ell\})$ and $v_1(A_1) \geq v_1(g_\ell)$. Hence, $\mathcal{A}$ is an $\EFL$ allocation.
\end{proof}

\subsection{$\bEFL$ implies Approximate $\bGMMS$}\label{sec:EFL-GMMS}
\begin{lemma}\label{lem:EFL-GMMS}
In any fair division instance with additive valuations, if an allocation $\mathcal{A}=(A_1, \ldots, A_n)$ is $\EFL$, then it is $1/2$-$\GMMS$ allocation as well. 
\end{lemma}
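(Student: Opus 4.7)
The plan is to fix an agent $i$ and a subgroup $J \subseteq [n]$ with $i \in J$, write $k = |J|$, $S = \bigcup_{j \in J} A_j$, and $\mu = \mu_i^{k}(S)$, and show directly that $v_i(A_i) \geq \mu/2$. I would split the other agents in $J$ according to which branch of the $\EFL$ definition they satisfy relative to $i$: let $T = \{j \in J \setminus \{i\} : |A_j \cap \{g : v_i(g) > 0\}| \geq 2\}$ be the agents forced to use the second branch, and $H = J \setminus (\{i\} \cup T)$ the rest. For each $j \in H$ let $b_j$ denote the (at most one) $i$-positive good in $A_j$, and set $B = \{b_j : j \in H\}$; call the goods in $B$ ``big'' and all remaining goods in $S$ ``small.''

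The first step is to bound the small goods. For $j \in T$ the $\EFL$ witness is a good $g_j \in A_j$ with $v_i(g_j) \leq v_i(A_i)$ and $v_i(A_j \setminus \{g_j\}) \leq v_i(A_i)$, so additivity and non-negativity give $v_i(g) \leq v_i(A_j \setminus \{g_j\}) \leq v_i(A_i)$ for every $g \in A_j \setminus \{g_j\}$, and therefore $v_i(g) \leq v_i(A_i)$ for every $g \in A_j$; summing yields $v_i(A_j) \leq 2 v_i(A_i)$. Goods in $A_i$ are also bounded by $v_i(A_i)$ by additivity, and the small goods inside any $A_j$ with $j \in H$ are zero-valued. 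Hence every small good has value at most $v_i(A_i)$ and the total small-good value is at most $v_i(A_i) + 2|T| v_i(A_i) = v_i(A_i)(1 + 2|T|)$.

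The second step is a pigeonhole on an optimal $k$-partition $(P_1, \ldots, P_k)$ of $S$ witnessing $\mu$. Since $|B| \leq |H| = k - 1 - |T|$, at most $|B|$ of the $k$ parts contain any big good, so at least $k - |B| \geq 1 + |T|$ parts consist entirely of small goods. Each such part has $v_i$-value at least $\mu$, so the total small-good value is at least $(1 + |T|)\mu$. Combining with the upper bound from the first step gives $(1+|T|)\mu \leq v_i(A_i)(1+2|T|)$, i.e.\ $v_i(A_i) \geq \tfrac{1+|T|}{1+2|T|}\mu \geq \mu/2$, which is the $1/2$-$\GMMS$ inequality for this choice of $J$. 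Since $i$ and $J$ were arbitrary, the lemma follows.

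The main obstacle I anticipate is step one's claim that every good in a case-two bundle is individually bounded by $v_i(A_i)$. This uses the full strength of $\EFL$---the two inequalities $v_i(g_j) \leq v_i(A_i)$ and $v_i(A_j \setminus \{g_j\}) \leq v_i(A_i)$ on the same witness $g_j$---and is precisely where $\EFL$ beats $\EFone$, which would only bound $v_i(A_j \setminus \{g_j\})$ and could tolerate an arbitrarily large $g_j$. Once that structural bound is in hand, the counts $|B| \leq k - 1 - |T|$ and $|T|+1$ force the ratio $\tfrac{1+|T|}{1+2|T|}$, which is monotone decreasing in $|T|$ with limit $1/2$, so the factor $1/2$ is inevitable (and in fact strictly improved whenever $|T|$ is bounded away from infinity).
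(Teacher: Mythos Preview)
Your argument is correct and follows essentially the same route as the paper's proof: both split $J\setminus\{i\}$ into agents whose bundle has at most one $i$-positive good versus at least two, use the second $\EFL$ clause to get $v_i(A_j)\le 2v_i(A_i)$ for the latter, and then apply a pigeonhole on an optimal $k$-partition of $S$ using the at-most-$|H|$ ``singleton'' positive goods coming from the former. The only cosmetic differences are that the paper phrases the pigeonhole as a reduction $\mu_i^{|J'|}(S')\ge\mu_i^{|J|}(S)$ to the smaller group $J'=\{i\}\cup T$ (your notation) and then averages over $J'$, whereas you keep the original group and track ``big'' versus ``small'' goods directly; your bookkeeping also keeps $v_i(A_i)$ separate from the $2v_i(A_i)$ bound, yielding the slightly sharper factor $\tfrac{1+|T|}{1+2|T|}\ge\tfrac12$ in place of the paper's flat $\tfrac12$.
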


\begin{proof} 
Fix an agent $i$ and a set of $k$ agents, $J \subseteq [n]$ which contains $i$, i.e., $|J|=k$ and $ J \ni i$. Also, let $S = \cup_{j \in J} A_j$ denote the set of all the goods allocated to the agents in $J$. We will show that agent if $(A_1, \ldots, A_n)$ is $\EFL$, then $v_i(A_i)$ is at least $\frac{1}{2}$ times the maximin share of $i$ restricted to $S$, i.e., $v_i(A_i) \geq \frac{1}{2} \mu_i^k(S)$. This establishes the stated claim. 

Write $S_i \subseteq [m]$ to denote the set of goods which are positively valued by $i$,  $S_i := \{ g' \in [m] \mid v_i(g') >0 \}$. Now, among the set of agents $J \setminus \{i \}$ consider the ones who are allocated at most one good from $S_i$; specifically, let $T := \{j \in J \setminus \{ i \} \mid  |A_j \cap S_i | \leq 1\}$.
Write $J' := J \setminus T$, $t' := |J'|$, and $S' = \cup_{j \in J'} A_j$. Note that the agent $i$ belongs to the group $J'$, and for all $j \in J'$ we have $|A_j \cap S_i | >1$. Therefore, the fact that $\mathcal{A}$ is $\EFL$ implies that, for all $j \in J'$, there exists a good $g_{(j)} \in A_j$ such that $v_i(A_i) \geq v_i(A_j\setminus \{g_{(j)}\})$ and $v_i(A_i)\geq v_i(g_{(j)})$. In other words, for the additive valuation $v_i$, we have $2 v_i(A_i) \geq v_i(A_j)$.

We will now establish the multiplicative bound with respect to $\mu_i^{t'}(S')$ (the maximin share of $i$ restricted to $S'$) and prove that $\mu_i^{t'}(S') \geq \mu_i^t (S)$. This will complete the proof. Since $v_i$ is additive, an averaging argument gives us 

\begin{align*}
\mu_i^{t'}(S') &  \leq \frac{1}{t'} v_i(S') = \frac{1}{t'} \sum_{j \in J'} v_i(A_j) \leq \frac{1}{t'} 2 t' v_i(A_i)
\end{align*}
Here, the last inequality uses the bound $2 v_i(A_i) \geq v_i(A_j)$ for all $j \in J'$. Therefore, $v_i(A_i) \geq \frac{1}{2} \mu_i^{t'}(S')$. To complete the proof, we need to show that $\mu_i^{t'}(S') \geq \mu^t_i (S)$. Note that---while considering the maximin shares of agent $i$---we can restrict our attention to $S_i$ (the set of goods which are positively valued by $i$). In particular, the equalities $\mu_i^{t'}(S') = \mu_i^{t'}(S' \cap S_i)$ and $\mu_i^{t}(S) = \mu^t_i(S \cap S_i)$ imply that, without loss of generality, we can work under the assumptions that $S \subseteq S_i$ and $S' \subseteq S_i$. Effectively, for all $j \in J$, we can slightly abuse notation and denote $A_j \cap S_i$ by $A_j$. Now, consider the allocation $ \mathcal{M} = (M_1,\ldots, M_t) \in \underset{(B_1, \ldots, B_t) \in \Pi_t(S)}{\argmax} \displaystyle \ \min_{j \in [t]} v_i (B_j)$. 

We have, $ \min_{j \in J} v_i(M_j)= \mu_i^t(S)$. Also, note that $|A_j | \leq 1$ for each agent $j \in K = J \setminus J'$. Therefore, there are at most $|K| = t-t'$ bundles in $\mathcal{M}$ with items from $S \setminus S'$. We choose $t'$ bundles from $ \mathcal{M} = (M_1,\ldots, M_t)$ which do not contain any item from $S \setminus S'$. Let us call these new bundles $\mathcal{M}'=(M'_1, \ldots, M'_{t'})$. By the definition of $\mathcal{M}$, the value of each such bundle for agent $i$ is greater than or equal to $\mu_i^t(S)$, that is, $v_i(M'_{j})\geq \mu_i^t(S)$ for all $  j \in [t']$. Since we have assumed that all items have nonnegative values, adding more items from the remaining $(t-t')$ bundles to any of the bundles in $\mathcal{M}'$ can only increase the value of the partitions. Thus, $\mu_i^{t'}(S') \geq \mu_i^t(S)$, which implies that $v_i(A_i) \geq \frac{1}{2} \mu_i^t(S)$.		
\end{proof}

Lemmas~\ref{lemma:EFL} and~\ref{lem:EFL-GMMS} directly establish Theorem~\ref{thm:main}. 

We note that (both in terms of the algorithm and $\EFL$'s implication) the approximation guarantee established in Theorem~\ref{thm:main} is essentially tight. Appendix~\ref{sec:tightExample} provides an example to establish this observation.

\section{Guaranteed Existence of $\bGMMS$ Allocations in Restricted Settings}
In this section, we prove the existence of $\GMMS$ for relevant valuation classes. 
\begin{theorem}
Groupwise maximin share allocations always exist under additive, binary valuations, i.e., such allocations exist when the additive valuations satisfy $v_i(g) \in\{0,1\}$ for all agents $ i \in[n]$ and goods $ g\in[m]$. 
\end{theorem}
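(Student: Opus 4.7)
The plan is to reduce the problem to finding an envy-free up to one good ($\EFone$) allocation, since under binary additive valuations any such allocation turns out to be $\GMMS$. Existence of $\EFone$ is already implicit in the paper: Algorithm~\ref{alg:envygraph} produces an $\EFL$ allocation, and $\EFL \Rightarrow \EFone$ as noted just before Definition~\ref{def:EFL}. So first I would simply run Algorithm~\ref{alg:envygraph} on the binary instance to obtain an allocation $\A = (A_1, \ldots, A_n)$.

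Next I would observe that under binary valuations $\EFone$ has a clean reformulation: for every pair $i,j \in [n]$, the $\EFone$ condition gives a good $g \in A_j$ with $v_i(A_i) \geq v_i(A_j \setminus \{g\}) \geq v_i(A_j) - 1$, where the last step uses $v_i(g) \in \{0,1\}$. Hence
\[
v_i(A_j) \leq v_i(A_i) + 1 \qquad \text{for all } i, j \in [n].
\]
A second useful fact is the closed form $\mu_i^{k}(T) = \lfloor v_i(T)/k \rfloor$ for binary additive $v_i$ and any $T \subseteq [m]$: the upper bound follows by averaging, and it is achieved by spreading the positively valued goods in $T$ as evenly as possible across $k$ bundles (padding with zero-valued goods where necessary). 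I would verify this in a single sentence.

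With these two facts in hand, the main computation is a one-line averaging argument. Fix $i$ and any subgroup $J \ni i$, and write $T_J := \bigcup_{j \in J} A_j$. Summing the pairwise bound over $j \in J \setminus \{i\}$ and adding $v_i(A_i)$ gives
\[
v_i(T_J) \;=\; \sum_{j \in J} v_i(A_j) \;\leq\; |J| \cdot v_i(A_i) \,+\, (|J| - 1),
\]
so $\mu_i^{|J|}(T_J) = \lfloor v_i(T_J)/|J| \rfloor \leq v_i(A_i) + \lfloor (|J|-1)/|J| \rfloor = v_i(A_i)$, which is exactly the $\GMMS$ requirement for $i$ and $J$.

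I do not expect a serious obstacle here. The entire argument rests on an off-the-shelf $\EFone$ producer (the envy-graph algorithm) together with the special closed form of $\mu_i^k$ under binary valuations, which makes the pairwise $\EFone$ bound automatically propagate to all group sizes by averaging. The only thing to double-check is the floor arithmetic---$(|J| v_i(A_i) + |J|-1)/|J| < v_i(A_i) + 1$---so the floor is at most $v_i(A_i)$, as needed.
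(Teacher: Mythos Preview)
Your proposal is correct and follows essentially the same route as the paper: obtain an $\EFone$ allocation, use binariness to get $v_i(A_j)\le v_i(A_i)+1$, sum over $J$ to bound $v_i(T_J)\le |J|\,v_i(A_i)+|J|-1$, and finish with an integrality/floor argument. The only cosmetic difference is that you invoke the closed form $\mu_i^{k}(T)=\lfloor v_i(T)/k\rfloor$, whereas the paper uses only the averaging upper bound $\mu_i^{|J|}(S)\le v_i(S)/|J|$ together with the fact that $\mu_i^{|J|}(S)$ is an integer; both lead to the same one-line conclusion. One small remark: in your displayed chain, the step $\lfloor v_i(T_J)/|J|\rfloor \le v_i(A_i)+\lfloor (|J|-1)/|J|\rfloor$ is valid precisely because $|J|\,v_i(A_i)$ is divisible by $|J|$ (so the floor does split here), and your final ``double-check'' sentence already justifies it cleanly.
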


\begin{proof}
To prove the theorem, it suffices to show that under additive, binary valuations, $\EFone$ implies $\GMMS$. Recall that $\EFone$ allocations are guaranteed to exist. Let $\mathcal{A}=(A_1,\dots,A_n)$ be an $\EFone$ allocation. Fix an agent $i$ and a group of agents $J \ni  i$. Also, write $S := \bigcup_{j\in J} A_j$. Next we complete the proof by showing that $v_i(A_i) \geq \mu_i^{|J|}(S)$. 
	
Since the valuations are binary, for any subset of goods $B \subseteq [m]$, the total value $v_i(B)$ is a non-negative integer. Therefore, $\mu_i^{|J|}(S)$ must be a non-negative integer. Moreover, since the valuations are additive, we have $\frac{1}{|J|}{v_i(S)} \geq \mu_i^{|J|}(S)$. In addition, the facts that (a)~$\mathcal{A}$ is an $\EFone$ allocation and (b)~the maximum value of any good is $1$ imply $v_i(A_i)\geq v_i(A_j) - 1$ for all $j \in J$. Therefore,

\begin{align*}
v_i(A_i) &   = \frac{1}{|J|}{|J| v_i(A_i)} \geq \frac{1}{|J|}\left(\!\!v_i(A_i) +\!\!\!\! \sum_{j\in J\setminus\{i\}}\!\!\!\!(v_i(A_j) - 1)\!\! \right)\\
			 &\geq \frac{1}{|J|}(v_i(S)-|J|+1)   = \frac{1}{|J|}v_i(S) - \frac{|J|-1}{|J|}\\
			 &\geq \mu_i^{|J|}(S) - \frac{|J|-1}{|J|} > \mu_i^{|J|}(S) - 1.
\end{align*}
The last inequality $v_i(A_i) - \mu_i^{|J|}(S)>-1$ implies $v_i(A_i) \geq \mu_i^{|J|}(S)$ since $v_i(A_i)$ and $\mu_i^{|J|}(S)$ are integers. Overall, the inequality $v_i(A_i) \geq \mu_i^{|J|}(S)$ shows that the $\EFone$ allocation $\mathcal{A}$ is $\GMMS$ as well. This completes the proof.
\end{proof}
Note that the following theorem holds for general combinatorial valuations and is not limited to the additive case. 
\begin{theorem}
If all the $n$ agents in a fair division instance have the same valuation (i.e., $v_i = v_j$ for all $i,j \in [n]$), then a groupwise fair allocation is guaranteed to exist. 
\end{theorem}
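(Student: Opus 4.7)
I would prove the theorem by taking a \emph{leximin-optimal} partition of the goods and showing that any such partition yields a GMMS allocation. Let $v$ denote the valuation shared by all $n$ agents. For an $n$-partition $\mathcal{P} = (P_1, \ldots, P_n)$ of $[m]$, let $\phi(\mathcal{P}) \in \mathbb{R}^n$ be the tuple $(v(P_1), \ldots, v(P_n))$ sorted in nondecreasing order. Since the collection of $n$-partitions is finite, we can pick a partition $\mathcal{A} = (A_1, \ldots, A_n)$ that maximizes $\phi$ in lexicographic order. Assign agent $i$ the bundle $A_i$; because the valuation is shared, this choice is immaterial to the GMMS thresholds, so any assignment works. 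The plan is to show that $\mathcal{A}$ is GMMS.

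Suppose for contradiction that $\mathcal{A}$ is not GMMS. Then there exist a group $J \subseteq [n]$ and an agent $i \in J$ with $v(A_i) < \mu_i^{|J|}\!\left(\bigcup_{j \in J} A_j\right)$. Set $\tau := \min_{j \in J} v(A_j) \leq v(A_i)$, so $\mu_i^{|J|}\!\left(\bigcup_{j \in J} A_j\right) > \tau$. Hence there is a $|J|$-partition $(B_j)_{j \in J}$ of $\bigcup_{j \in J} A_j$ such that $v(B_j) > \tau$ for every $j \in J$. Form the new $n$-partition $\mathcal{A}'$ by keeping $A_j$ for $j \notin J$ and substituting $B_j$ for $A_j$ whenever $j \in J$, and compare $\phi(\mathcal{A}')$ to $\phi(\mathcal{A})$.

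For the bookkeeping step, introduce $a := |\{j \notin J : v(A_j) < \tau\}|$, $b := |\{j \notin J : v(A_j) = \tau\}|$, and $c := |\{j \in J : v(A_j) = \tau\}|$, noting $c \geq 1$ because $\tau$ is attained inside $J$. In $\phi(\mathcal{A})$ the first $a$ entries are strictly below $\tau$, the next $b + c$ entries equal $\tau$, and the remaining entries exceed $\tau$. In $\phi(\mathcal{A}')$ the first $a$ entries are the same values below $\tau$ (they come from outside $J$ and are unchanged), the next $b$ entries equal $\tau$ (again from outside $J$), and every remaining entry is strictly greater than $\tau$, since $v(B_j) > \tau$ for each $j \in J$. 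Hence $\phi(\mathcal{A})_\ell = \phi(\mathcal{A}')_\ell$ for $\ell \leq a + b$, while $\phi(\mathcal{A})_{a+b+1} = \tau < \phi(\mathcal{A}')_{a+b+1}$, giving $\phi(\mathcal{A}') >_{\mathrm{lex}} \phi(\mathcal{A})$ and contradicting leximin optimality.

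The argument uses only that the valuation is common to the agents, never additivity, matching the generality stated in the theorem. The main thing to verify carefully is the counting step: one must confirm that the entries of $\phi$ strictly below $\tau$ come exclusively from bundles indexed outside $J$ in both partitions (since every $B_j$ exceeds $\tau$ and, by definition of $\tau$, every $A_j$ with $j \in J$ is at least $\tau$), so those positions of $\phi(\mathcal{A})$ and $\phi(\mathcal{A}')$ genuinely coincide. Once this is in hand, the strict gain at position $a + b + 1$ follows from $c \geq 1$, and the leximin contradiction completes the proof.
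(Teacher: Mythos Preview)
Your proof is correct and follows essentially the same route as the paper: take a leximin-optimal allocation (the paper calls it the lexicographically maximal one), assume a GMMS violation for some group $J$, replace the bundles indexed by $J$ with a witnessing maximin partition, and derive a lexicographic improvement. Your careful bookkeeping with the counts $a,b,c$ actually fills in a step the paper leaves implicit, namely the verification that the replacement genuinely yields a lexicographic gain.
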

\begin{proof}
Given a fair division instance with $m$ goods and $n$ agents with the same valuation function $v$, we will define an order on $\Pi_n([m])$, the set of $n$-partitions of $[m]$. For vectors $u, u' \in \mathbb{R}^n$, we say that $u$ \emph{lexicographically dominates} $u'$ if $u=u'$, or there exists an index $a \in [n]$ such that $u_{(a)} > u'_{(a)}$ and for all $b < a$ we have $u_{(b)} = u'_{(b)}$; here, $u_{(k)}$ and $u'_{(k)}$ denotes the $k$th smallest component of $u$ and $u'$, respectively.  Extending this definition, an allocation $(A_1, \ldots, A_n) \in \Pi_n([m])$ is said to  lexicographically dominate another allocation $(B_1, \ldots, B_n) \in \Pi_n([m])$ iff the vector of valuations $(v(A_1), \ldots, v(A_n))$ lexicographically dominates $(v(B_1), \ldots, v(B_n))$.  
Note that lexicographic domination defines a total order over the equivalence classes of $n$-partitions. In addition, up to permutations of the valuation vector, there exists a unique maximal allocation with respect to this order, i.e., there exists an allocation $\mathcal{A}^*=(A^*_1, \ldots, A^*_n)$ which lexicographically dominates all other allocations. 

We will show that $\mathcal{A}^*$ is $\GMMS$. For contradiction, say that this is not the case, and the $\GMMS$ condition is violated for subset $J \subseteq [n]$ and agent $i \in J$, i.e., $v_i(A^*_i) < \mu_i^{|J|}\left(S \right)$; here $S:=\cup_{j \in J} A^*_j$. Since the valuations are identical, the maximin share of each agent in $J$ restricted to $S$ is the same, i.e., $\mu_i^{|J|}\left(S \right) = \mu_k^{|J|}\left( S \right) $, for all $k \in J$. Hence, $\GMMS$ condition must also be violated for agent $\argmin_{j \in J} v(A^*_j)$. This observation and the definition of maximin shares imply that there exists a $|J|$-partition of $S$, say $(M_1, \ldots, M_{|J|})$, such that $\min_{1\leq a \leq |J|} v(M_a) > \min_{j \in J} v(A_j^*)$. 

Now, consider an allocation $\mathcal{B}=(B_1, \ldots, B_n)$ obtained from $\mathcal{A}^*$ by (i) replacing the $|J|$ bundles $A_j^*$s (for $j \in J$) with $M_a$s (for $1 \leq a \leq |J|$), and (ii) setting $B_k = A^*_k$ for all $k \notin J$. Note that, even under any permutation of the bundles, the allocation $\mathcal{B} \neq \mathcal{A}^*$. Also, the construction of $\mathcal{B}$ ensures that it lexicographically dominates $\mathcal{A}^*$.  This contradicts the maximality (under the defined lexicographic order) of $\mathcal{A}^*$ and, hence, the stated claim follows. 
\end{proof}

\section{Some Empirical Results}
For an experimental investigation, we generated $1000$ random instances, for several combinations of $n$ agents and $m$ goods ($n$ ranging from $3$ to $5$, and $m$ from $3$ to $11$), by randomly drawing the valuations from two different distributions (a)~gaussian distribution with mean $0.5$, standard deviation $0.1$ (truncated at $0$ to ensure nonnegative valuations) and (b)~uniform distribution $[0,1]$. These are the same set of experiments that \cite{bouveret2014characterizing} carried out for studying the existence of various fairness notions under additive valuations.

In addition, we considered the modification of these instances wherein all the agents agree on the same ranking of the goods, but may have different valuations for the same item. Such instances are said to have same-order preferences ($\mathrm{SOP}$), and were studied by \cite{bouveret2014characterizing}. They showed that, when it comes to (empirically) finding an $\MMS$ allocation, $\mathrm{SOP}$ instances are the hardest. We observe similar results for $\GMMS$: finding a $\GMMS$ allocation, done using brute-force search, requires noticeably more time in $\mathrm{SOP}$ instances, than in non-$\mathrm{SOP}$ instances. Our empirical results and observations are summarized below:
\begin{enumerate}
\item $\GMMS$ allocations exist empirically in all randomly generated instances (which is similar to the experimental results for $\MMS$ \cite{bouveret2014characterizing}). These simulation results indicate that we may not fall short on such generic existence results by strengthening the maximin solution concept. 
\item Allocations generated by the $\EFL$ algorithm (Algorithm \ref{alg:envygraph}) on the random instances achieve an approximation factor of $0.98$ (on an average) which is better than our theoretically obtained worst-case bound of $0.5$. Figure (\ref{fig:alphaGMMS}) tabulates the approximation factor for $\GMMS$ using the $\EFL$ algorithm, averaged over $1000$ instances.
\begin{figure}[!ht]
    \centering
    \includegraphics[]{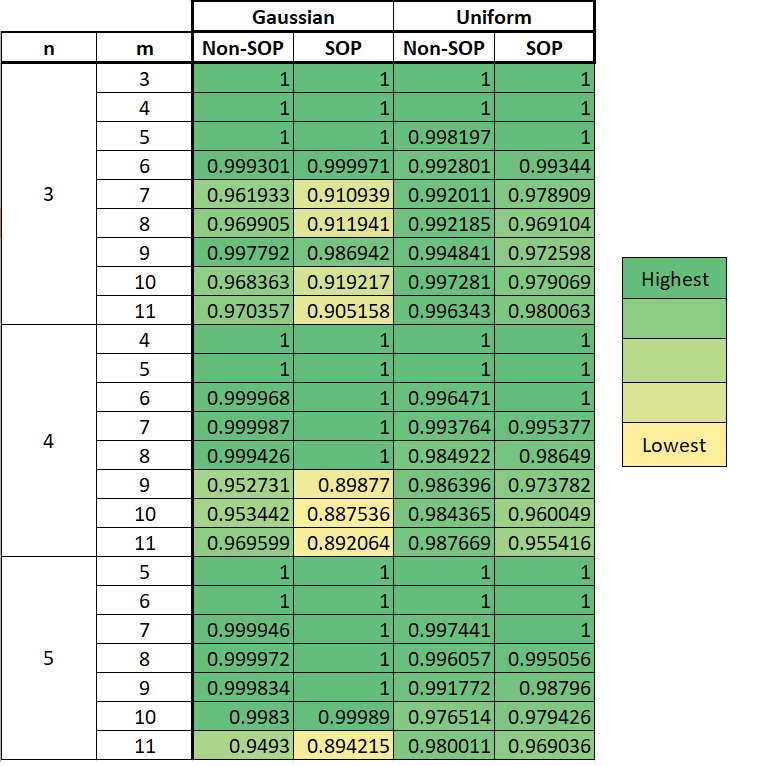}
    \caption{$\EFL$ algorithm results on approximate $\GMMS$. \label{fig:alphaGMMS}}
\end{figure}
The approximation factor (on average) is lower for $\mathrm{SOP}$ than for non-$\mathrm{SOP}$ instances. Moreover, we observe that, for randomly generated instances, the approximation factor is better when the number of items is a multiple of the number of agents. This may be because of the round robin nature of the algorithm which gives an agent her most desirable good at each round.
\item As expected, our proposed $\EFL$ algorithm ran much faster than the (exponential-time) algorithm for finding exact $\GMMS$. Our algorithm was about $10^7$ times faster than the exact $\GMMS$ computation on a machine with a quad Intel Core $i7$ processor and $32$ GB RAM.
\end{enumerate} 

\section{Conclusion and Future Work}
In this paper, we introduced the concept of $\GMMS$ to address fair allocation of indivisible goods, thereby strengthening the well-studied notions of $\MMS$ and $\PMMS$. We established the existence of $1/2$-$\GMMS$ under additive valuations, and developed an efficient algorithm for finding such allocations. We also proved that under specific settings exact $\GMMS$ allocations always exist. In addition, $\GMMS$ allocations were always found in our experiments (over a large number of randomly generated instances). This indicates why it seems nontrivial to obtain simpler\footnote{The intricate examples showing the nonexistence of $\MMS$ under additive valuations \cite{procaccia2014fair,kurokawa2016can} also serve as counterexamples for $\GMMS$.} examples which refute the guaranteed existence of $\GMMS$ allocations. Finding an instance which admits an $\MMS$ allocation but not a $\GMMS$ allocation remains an interesting, open problem. 

Our work addresses key questions about groupwise fairness when the valuations are additive, and it motivates the study of this notion in more general settings, e.g., under submodular valuations, or under additional constraints, such as the ones considered by~\cite{fair-graph}. Another interesting direction of future work is to understand groupwise fair division of indivisible goods among strategic agents.

\subsubsection*{Acknowledgments.} Siddharth Barman was supported by a Ramanujan Fellowship (SERB - {SB/S2/RJN-128/2015}). Arpita Biswas gratefully acknowledges the support of a Google PhD Fellowship Award.

\bibliographystyle{apalike}
\bibliography{MMSbib}

\appendix
\section*{Appendix}
\section{Tight Example for Algorithm~\ref{alg:envygraph}}\label{sec:tightExample}

This section shows that the approximation guarantee established in Theorem~\ref{thm:main} is essentially tight. Specifically, consider the following fair division instance with $n$ agents and ($3n-2$) goods. The additive valuation of agent $1$ is as follows:   
\begin{itemize}
\item For each of the first $n$ goods---say, $\{l_1, l_2, \ldots, l_n\}$---the valuation of agent $1$ is equal to one. 
\item The next $(n-1)$ goods, $\{d_2,d_3, \ldots, d_n\}$, are valued at $(n-1)/n$ by agent $1$.  
\item Finally, the remaining $(n-1)$ goods, $\{s_2,s_3, \ldots, s_n\}$,  are valued at $1/n$. 
\end{itemize}

The remaining $(n-1)$ agents value $l_1$ at $2$, and they value $\{s_2,\ldots,s_n\}$ at $1$. In addition, goods $\{d_2, d_3, \ldots, d_n\}$ along with $\{l_2,\ldots, l_n\}$ have a value of $1/2 - 1/n$ for these $(n-1)$ agents. 

Consider allocation $\mathcal{A}=(A_1,...,A_n)$ wherein the first agent gets bundle $A_1 = \{l_1\}$, and the remaining agents obtain $A_k = \{l_k, d_k, s_k\}$ for all $k>1$. This allocation is $\EFL$ with respect to all the agents. But, the maximin share of agent $1$ is close to $2$; specifically, in the following $n$-partition the value of each bundle for agent $1$ is equal to $2 - 1/n$:  $\{l_1, s_2, \ldots, s_n\}, \{l_2, d_2\}, \ldots,\{l_n, d_n\}$.

Hence, as $n$ increases, the approximation guarantee of this $\EFL$ allocation for agent $1$ approaches $1/2$. 

Furthermore, we can ensure that the proposed algorithm finds this $\EFL$ allocation. Specifically, during the algorithm's execution, say, agent $1$ picks $l_1$ at the beginning. The other agents now pick in the remaining iterations. Since the other agents value the goods $s_j$s the most among the unallocated goods, agents $2$ to agent $n$ will select $s_2$ to $s_n$ during the initial iterations. At this point of time, the partial allocation is $\{l_1\}, \{s_2\}, \ldots,\{s_n\}$.

Agent $1$ is still envied the most. Now, agents $2$ to $n$ pick $d_2$ to $d_n$ one after the other (the ties are broken in favor of the target allocation). The intermediate allocation obtained by the algorithm is $\{l_1\}, \{s_2, d_2\}, \ldots,\{s_n, d_n\}$. Still, Agent $1$ is the most envied one and she does not envy others. Goods $l_2$ to $l_n$ remain unallocated. Agents $2$ to $n$ finally select them, one after the other, to give us the required $\EFL$ allocation.
\end{document}